\newcommand{\mo}{m_0}
\newcommand{\no}{n_0}
\newcommand{\vo}{v_0}
\newcommand{\ri}{{\rm i}\,}
\newcommand {\rk} {\mathop{\rm rank}\nolimits}
\newcommand {\lsgn} {\mathop{\rm bsgn}\nolimits}
\newcommand{\FF}{\mathcal{F}_\mathbf{f}}
\newcommand {\cf}{\mathbf{f}}
\newcommand {\ts}[1] {\textsl{#1}}
\newcommand {\mP}{\mathcal{P}}
\newcommand {\bT}{\mathbf{T}}
\newcommand {\bV}{\mathbf{V}}
\newcommand {\bW}{\mathbf{W}}
\newcommand {\bZ}{\mathbf{Z}}
\newcommand {\bX}{\mathbf{X}}
\newcommand {\bU}{\mathbf{Y}}
\newcommand {\bu}{\mathbf{y}}
\newcommand {\vu}{y}
\newcommand {\bv}{\mathbf{v}}
\newcommand {\bw}{\mathbf{w}}
\newcommand {\bz}{\mathbf{z}}
\newcommand {\bbT}{\mathbb{T}}
\newcommand {\bbR}{\mathbb{R}}
\newcommand {\bbZ}{\mathbb{Z}_2}
\newcommand {\ds}{\displaystyle}
\newcommand {\mstrut}{\vphantom{\bigl(}}
\newcommand {\rT}{\mathrm{T}}
\newcommand{\ru}{\rule{0pt}{12pt}}
\newcommand{\vk}{\varkappa}
\newcommand{\gs}{\geqslant}
\newcommand{\ls}{\leqslant}
\newcommand{\bo}{{\boldsymbol \omega}}
\newcommand{\ba}{{\boldsymbol \alpha}}
\newcommand{\bb}{{\boldsymbol \beta}}
\begin{document}

\udk{531.38}

\title{ОБОБЩЕНИЕ 4-ГО КЛАССА АППЕЛЬРОТА: \\ФАЗОВАЯ ТОПОЛОГИЯ}
      {Обобщение 4-го класса Аппельрота: фазовая топология}

\author{М.П.~Харламов }

\thanks{Работа выполнена при финансовой поддержке гранта РФФИ № 10-01-00043.}

\date{01.07.10}

\address{Волгоградская академия гос. службы, Россия}

\email{mharlamov@vags.ru}

\maketitle

\begin{abstract}
Статья продолжает цикл работ автора (Механика твердого тела,
вып.~35, 2005 и вып.~38, 2008), в которых исследуется интегрируемая
динамическая система на четырехмерном инвариантном подмножестве
фазового пространства задачи о движении твердого тела в двойном
силовом поле. При стремлении к нулю напряженности одного из полей
эта система обращается в семейство особо замечательных движений
\mbox{4-го} класса Аппельрота волчка Ковалевской в поле силы
тяжести. Предложен метод описания фазовой топологии при наличии
алгебраических зависимостей фазовых переменных от переменных
разделения с использованием булевых вектор-функций. Выполнен грубый
топологический анализ рассматриваемой системы с двумя степенями
свободы.
\end{abstract}


\Section{Исходные соотношения}\label{sec1} В
интегрируемой системе с тремя степенями свободы, описывающей
движение волчка типа Ковалевской в двойном силовом поле
\cite{ReySem}, в работе \cite{Kh34} найдены все так называемые
критические подсистемы -- инвариантные многообразия в фазовом
пространстве $\mP^6 \subset \bbR ^9(\ba,\bb,\bo)$, почти всюду
четырехмерные, на которых индуцируются гамильтоновы системы с двумя
степенями свободы и объединение которых (с точностью до
однопараметрического семейства маятниковых движений вокруг оси
динамической симметрии, формирующего критическую подсистему с одной
степенью свободы) составляет критическое множество трех первых
интегралов на $\mP^6$. Далее рассматривается критическая подсистема
на подмногообразии $\mathfrak{O} \subset \mP^6$, изучавшаяся в
работах \cite{Kh35,Kh38}. В \cite{Kh35} указаны частные интегралы
$S$ и $\rT$, образующие полную инволютивную систему на
$\mathfrak{O}$, построена их бифуркационная диаграмма, изучены
проекции интегральных многообразий на плоскость вспомогательных
переменных. В результате получено {\it допустимое множество} --
образ фазового пространства $\mathfrak{O}$ рассматриваемой
подсистемы под действием интегрального отображения $\mathcal{F}=S{\times}\rT: \mathfrak{O} \to \bbR ^2$.
На основе геометрии
проекций предложена система координат (вообще говоря, комплексная),
в которых переменные должны разделяться. Эти результаты получили
дальнейшее развитие в работе \cite{Kh38}, в которой выполнено
фактическое разделение переменных в {\it вещественной} области и
построены алгебраические выражения фазовых переменных через
переменные разделения. Приведем формулы, необходимые в дальнейшем.

Пусть $a,b$ -- постоянные модули векторов $\ba,\bb$ ($a>b>0$),
$s,\tau$ -- константы интегралов $S,\rT$. Введем вспомогательные
константы $p,r,\sigma,\chi,\vk$, полагая ${p>0},{r>0},{\chi \gs 0}$ и
\begin{equation}\label{eq1_1}
\begin{array}{c}
p^2=a^2+b^2, \quad r^2=a^2-b^2, \\
\sigma=\tau^2-2p^2\tau+r^4, \quad 4s^2 \chi^2 = \sigma+4s^2\tau,
\quad \vk =\sqrt{\sigma}.
\end{array}
\end{equation}
Фиксируем интегральное многообразие
$$
\FF =\{S=s,\rT=\tau\} \subset \mathfrak{O}, \qquad \cf=(s,\tau)
$$
и вводим переменные разделения $t_1,t_2$, полагая
\begin{equation}\notag
t_1= \ds{\frac{\tau y+ xz}{\tau-x^2}}, \qquad t_2= \ds{\frac{\tau y-
xz}{\tau-x^2}},
\end{equation}
где
$$
\begin{array}{l}
x^2=(\alpha_1-\beta_2)^2+(\alpha_2+\beta_1)^2, \\
y^2=p^2-2(\alpha_1 \beta_2-\alpha_2 \beta_1)-\tau, \\
z^2=\sigma
x^2+\tau y^2-\sigma \tau.
\end{array}
$$
Тогда, как показано в \cite{Kh38}, для конфигурационных переменных
будем иметь
\begin{equation}
\begin{array}{l}
\displaystyle{\alpha_1=(U_1-U_2)^2\frac{(\mathcal{A}-r^2 U_1 U_2)(4
s^2 \tau+U_1 U_2)-(\tau+r^2) s\tau P_1 V_1 P_2 V_2}{4 r^2
s\, \tau (t_1^2-t_2^2)^2},
} \\[3mm]
\displaystyle{\alpha_2=\ri (U_1-U_2)^2 \frac{(\mathcal{A} -r^2 U_1
U_2)s\tau V_1 V_2 -(4 s^2 \tau+U_1 U_2)(\tau+r^2) P_1 P_2}{4
r^2 s\, \tau (t_1^2-t_2^2)^2},
}\\[3mm]
\displaystyle{\alpha_3= \frac{R }{2 r } \, \frac {M_1
M_2}{t_1+t_2},} \\
\displaystyle{\beta_1=\ri (U_1-U_2)^2
\frac{(\mathcal{B}+r^2 U_1 U_2)s\tau  V_1 V_2-(4 s^2 \tau+U_1
U_2)(\tau-r^2) P_1 P_2}{4 r^2 s\, \tau (t_1^2-t_2^2)^2},
} \\[3mm]
\displaystyle{\beta_2=-(U_1-U_2)^2 \frac{(\mathcal{B} + r^2 U_1
U_2)(4 s^2 \tau + U_1 U_2)-(\tau-r^2) s\tau  P_1 V_1 P_2
V_2}{4 r^2 s\, \tau (t_1^2-t_2^2)^2},
}\\[3mm]
\displaystyle{\beta_3= - \ri \frac{R}{2 r} \, \frac { N_1
N_2}{t_1+t_2},}
\end{array}\label{eq1_3}
\end{equation}
а для угловых скоростей получим выражения
\begin{equation}
\begin{array}{l} \displaystyle{\omega_1=  \ri \frac{R }{4 r s\,
\sqrt{2}} \, \frac{M_2 N_1
U_1 V_2 + M_1 N_2 U_2 V_1}{ t_1^2-t_2^2},} \\[3mm]
\displaystyle{\omega_2= \frac{R}{4 r s\, \sqrt{2} } \,
\frac{M_2 N_1 U_2 V_1 + M_1 N_2 U_1 V_2}{t_1^2-t_2^2},} \\[3mm]
\displaystyle{\omega_3=  \ri \frac{{U_1-U_2}}{\sqrt{2}}\frac{M_2 N_2
V_1 - M_1 N_1 V_2}{t_1^2-t_2^2}.}
\end{array}\label{eq1_4}
\end{equation}
Здесь введены алгебраические радикалы
\begin{equation}\label{eq1_2}
\begin{array}{ll}
K_1=\sqrt{\mstrut t_1+\varkappa}\,, & K_2=\sqrt{\mstrut t_2+\varkappa}\,,\\
L_1=\sqrt{\mstrut t_1-\varkappa}\,, & L_2=\sqrt{\mstrut t_2-\varkappa}\,,\\
M_1=\sqrt{\mstrut t_1+\tau+r^2}\,, & M_2=\sqrt{\mstrut
t_2+\tau+r^2}\,,
\\[1.5mm]
N_1=\sqrt{\mstrut t_1+\tau-r^2}\,, &
N_2=\sqrt{\mstrut t_2+\tau-r^2}\,,\\
V_1 = \sqrt{\mstrut \ds{\frac{t_1^2-4s^2\chi^2}{s\tau}}}\,, & V_2 =
\sqrt{\mstrut \ds{\frac{t_2^2-4s^2\chi^2}{s\tau}}}
\end{array}
\end{equation}
и для сокращения записи обозначено
\begin{equation}\notag
\begin{array}{l}
U_1=K_1 L_1, \quad U_2=K_2 L_2,\quad R=K_1 K_2 + L_1 L_2,\\
P_1=M_1 N_1, \quad P_2=M_2 N_2,\\
\mathcal{A}=[(t_1+\tau+r^2)(t_2+\tau+r^2)-2(p^2+r^2)r^2]\tau, \\
\mathcal{B}=[(t_1+\tau-r^2)(t_2+\tau-r^2)+2(p^2-r^2)r^2]\tau.
\end{array}
\end{equation}
При этом переменные $t_1,t_2$ удовлетворяют дифференциальным
уравнениям
\begin{equation}\label{eq1_5}
\begin{array}{ll}
\displaystyle{(t_1-t_2)\frac{d t_1}{d t} = \sqrt{\frac{1}{2 s
\tau}(t_1^2-4s^2\chi^2)(t_1^2-\sigma)[(t_1+\tau)^2-r^4]}\, ,} \\
[3mm] \displaystyle{(t_1-t_2)\frac{d t_2}{d t} = \sqrt{\frac{1}{2 s
\tau}(t_2^2-4s^2\chi^2)(t_2^2-\sigma)[(t_2+\tau)^2-r^4]}\,.}
\end{array}
\end{equation}

Связную компоненту множества, в котором при заданных $s,\tau$
осциллируют переменные $t_1,t_2$, будем называть достижимой
областью. Достижимая область здесь является либо прямоугольником,
либо произведением отрезка на дополнение к интервалу. Для внутренних
точек достижимой области выражения \eqref{eq1_3}, \eqref{eq1_4}
определят, вообще говоря, $2^{10}$ различных точек фазового
пространства. В то же время, для решения задачи грубого
топологического анализа необходимо для каждой достижимой области
указать количество накрывающих ее связных компонент интегрального
многообразия (торов Лиувилля). В критических случаях, когда точка
$\cf=(s,\tau)$ принадлежит бифур\-кационной диаграмме, в достижимых
областях возникают особенности. Знание количества компонент
связности критической интегральной поверхности и близлежащих
регулярных интегральных многообразий даст и полную информацию о
характере имеющихся бифуркаций. Непосредственное исследование
уравнений \eqref{eq1_3}, \eqref{eq1_4} является крайне громоздким
и вряд ли разумно. Ниже предложен простой способ решения
поставленной задачи, основанный на некоторых стандартных действиях с
двоичными матрицами.

\Section{Метод булевых функций}\label{sec2} Найденные зависимости
фазовых переменных от переменных разделения представляют собой
полиномы от радикалов \eqref{eq1_2}, коэффициенты которых --
однозначные функции от $\bT=(t_1,t_2)$. Обозначим через $\bX$
фазовый вектор, через $\bU$ -- вектор, составленный из радикалов \eqref{eq1_2}.
Компоненты $Y_i$ входят в выражения для $\bX$ в виде
произведений-мономов (причем только в степени 1, так как $Y_i^2$ --
однозначная функция). Представим все такие мономы в виде одного
вектора $\bZ$. Имеем
\begin{equation}\label{eq2_1}
\bX=\mathbf{A} + B \, \bZ,
\end{equation}
где вектор $\mathbf{A}$ и матрица $B$ -- однозначные функции от
$\bT$.

\begin{remark}[1] {\it Далее мы говорим о знаках радикалов \eqref{eq1_2},
фигурирующих как сомножители в мономах $Z_i$. В то же время эти
радикалы могут быть как вещественными, так и чисто мнимыми. Однако,
выбрав для рассмотрения некоторую достижимую область, меняя при
необходимости знаки подкоренных выражений в радикалах и вынося
возникающую мнимую единицу в коэффициенты, можно добиться того,
чтобы все величины в $\eqref{eq2_1}$ были вещественными. Будем
иметь это в виду при рассмотрении конкретных областей. На общий ход
рассуждений и результат подобные преобразования в различных областях не влияют.}
\end{remark}

Малым шевелением $\cf$ сделаем все торы в составе $\FF$
нерезонансными, и выберем $\bT$ внутри достижимой области так, что
зависимости \eqref{eq2_1} будут знакоопределенными в следующем
смысле: замена знака у любого из мономов $Z_i$ меняет точку $\bX$.
Иначе говоря, при выбранном (и далее фиксированном) $\bT$ количество
точек $\bX$, заданных равенством \eqref{eq2_1}, равно $2^k$, где $k={\dim
\bZ}$. Пусть $\mathbf{X}_0$ -- одна из точек \eqref{eq2_1}, а
$\FF(\bX_0)$ -- связная компонента $\FF$, содержащая $\bX_0$.
Выпустим траекторию из точки $\bX_0$. Вдоль нее часть радикалов
$Y_i$ имеет постоянный знак (назовем их радикалами первой группы), а
другая часть этих же радикалов периодически меняет знак (эту часть называем второй
группой). Как правило, вторая группа имеет по два радикала на каждую
из переменных разделения. Соберем первую группу в вектор
$\bV=(V_1,\ldots,V_m)$, а вторую -- в вектор $\bW=(W_1,\ldots,W_n)$.
Выбранный в начальный момент знак любого $V_i$ сохраняется вдоль
траектории, в то время как любой набор знаков величин $W_j$ будет на
этой траектории получен. Отсюда сразу же следует, что если точка
\eqref{eq2_1} отличается от $\bX_0$ только знаками радикалов второй
группы, то она принадлежит $\FF(\bX_0)$. Более того, если у точек
$\bX_0$ и $\bX$, полученных при заданном $\bT$, различны знаки
радикалов первой группы, но, меняя знаки радикалов второй группы,
можно из набора мономов $\bZ_0$ точки $\bX_0$ получить набор мономов
$\bZ$ точки $\bX$, то $\bX \in \FF(\bX_0)$. Формализуем эти
рассуждения в терминах булевых функций.

Введем функцию $\lsgn: \bbR \to \bbZ$, которую будем называть булевым знаком:
\begin{equation}\notag
x=(-1)^{\lsgn(x)}|x|,\qquad x \in \bbR.
\end{equation}
Очевидна
мультипликативно-аддитивная двойственность:
\begin{equation}\notag
\lsgn (x_1 x_2) = \lsgn (x_1) \oplus \lsgn (x_2)
\end{equation}
(сумма по модулю 2). Для введенных выше радикалов и мономов
обозначим
\begin{equation}\notag
\vu_i=\lsgn(Y_i), \quad v_i=\lsgn(V_i), \quad w_i=\lsgn(W_i), \quad
z_i=\lsgn(Z_i).
\end{equation}
Имеем
$$
Z_i=\prod_{j=1}^{m+n}Y_j^{c_{ij}}, \quad c_{ij}\in \bbZ \qquad \Rightarrow \qquad z_i=\bigoplus_{j=1}^{m+n}{c_{ij}}\vu_j.
$$
Поэтому многозначная зависимость \eqref{eq2_1} описывается
$\bbZ$-линейным отображением $C$:
\begin{equation}\label{eq2_2}
\bz = C \bu, \qquad \bu \in \bbZ^{m+n}, \quad \bz \in \bbZ^k.
\end{equation}
В соответствии с разбиением радикалов на группы аргумент имеет вид
\begin{equation}\label{eq2_3}
\bu =(\bv,\bw), \qquad \bv \in \bbZ^m, \quad \bw \in \bbZ^n.
\end{equation}
При этом, фиксируя $\bv$, порождающее точку $\bX_0$, и меняя $\bw$
произвольно, получим двоичные наборы $\bz$, которым отвечают точки
одного тора в составе $\FF$. Поскольку при заданном
$\bT$ точки фазового пространства однозначно зависят от $\bz$, то
множество точек над $\bT$, принадлежащих связной компоненте
$\FF(\bX_0)$, находится во взаимно-однозначном соответствии с
множеством $\{C(\bv,\bw):\bw \in \bbZ^n\}$. Эти множества для
различных $\bv$ либо не пересекаются, либо совпадают, порождая тем
самым отношение эквивалентности в $\bbZ^m$. Количество $c(\cf)$
классов эквивалентности равно количеству связных компонент в $\FF$,
накрывающих выбранную достижимую область (связную область осцилляции
переменных разделения).
\begin{lemma}[1] Пусть $Y,Z$ -- векторные пространства $($над любым полем
$F)$ и $C: Y \to Z$ -- линейное отображение. Пусть $Y=V{\times}W$
-- прямое произведение векторных пространств $V,W$. Определим
отношение эквивалентности в $V$, полагая
\begin{equation}\notag
v'\sim v'' \Leftrightarrow \exists \, w',w'' \in W: \;
C(v',w')=C(v'',w''),
\end{equation}
и пусть $K_v$ -- класс эквивалентности элемента $v\in V$. Тогда

$1^\circ$ $K_0$ -- подпространство в $V$;

$2^\circ$ $K_v=v+K_0$;

$3^\circ$ множество классов эквивалентности как векторное
пространство над $F$ имеет размерность $d=P-Q$, где $P$ -- ранг
ограничения отображения $C$ на подпространство $V{\times}\{0\}$, а
$Q=\dim R$, где $R=C(V{\times}\{0\}) \cap C(\{0\}{\times}W)$.
\end{lemma}
\begin{proof} Утверждения $1^\circ$ и $2^\circ$ следуют из
линейности $C$. Тогда множество классов эквивалентности
отождествляется с фактор-про\-стран\-ством $V/K_0$. Поскольку $v \sim 0$
равносильно существованию некоторого $w\in W$, такого, что
$C(v,0)=C(0,w)$, то $V/K_0 \cong {C(V{\times}\{0\})/R}$, откуда
и следует утверждение $3^\circ$.
\end{proof}

Возвращаясь к отображению \eqref{eq2_2}, запишем его матрицу согласно
разбиению~\eqref{eq2_3} в виде $C=\left(C_{\bv}|
C_{\bw} \right)$, где $C_{\bv}$ и $C_{\bw}$ -- двоичные матрицы
размерности $k{\times}m$ и $k{\times}n$ соответственно. Очевидно,
элементарные преобразования строк $C$ и столбцов $C_{\bv}$ или
$C_{\bw}$, которые можно трактовать как автоморфизмы пространств
$Z=\{\bz\}=\bbZ^k$, $V=\{\bv\}=\bbZ^m$, $W=\{\bw\}=\bbZ^n$ или как
замены базисов в них, не изменяют количества классов
эквивалентности. Над полем $\bbZ$ такие преобразования сводятся к
перестановке строк (столбцов одной группы) или прибавлением к одной
строке другой строки (к одному столбцу другого столбца той же
группы). В результате матрица $C$ может быть приведена к виду
\begin{equation}\label{eq2_4}
C=
\begin{tabular}{||m{1.2cm}|m{2cm}|m{1.3cm}|m{1.2cm}||}

\multirow{2}{1.2cm}{\centering $E_P$} & \multirow{2}{2cm}{\centering
$0_{P,m-P}$} & \multirow{1}{1.3cm}{\centering $0_{P-Q,Q}$} &
\multirow{2}{1.2cm}{\centering $0_{P,n-Q}$} \\
\hhline{||~|~|-|~||}

{} & {} & \multirow{1}{1.3cm}{\centering $E_Q$} & {} \\
\hhline{||-|-|-|-||}

\multirow{1}{1.2cm}{\centering $0_{k-P,P}$} &
\multirow{1}{2cm}{\centering $0_{k-P,m-P}$} &
\multirow{1}{1.3cm}{\centering $0_{k-P,Q}$} &
\multirow{1}{1.2cm}{\centering $\cdots$}\\

\end{tabular}\, ,
\end{equation}
где $P=\rk C|_{\bbZ^m}$, $Q=\dim (C(\bbZ^m)\cap C(\bbZ^n))$, через
$0_{i,j}$ обозначена нулевая $i{\times}j$-матрица, а сомножители в
разложении $\bbZ^{m+n}={\bbZ^m}{\times}{\bbZ^n}$ естественным
образом отождествляются с подпространствами. Последние $k-P$ строк,
в которых на месте многоточия могут содержаться ненулевые элементы,
являются теми координатами преобразованной булевой вектор-функции,
которые зависят только от аргументов второй группы. Изменение этих
аргументов на количество классов эквивалентности, конечно, не
влияет. Поэтому такие строки могут быть отброшены, после чего могут
быть отброшены и нулевые столбцы, и все образовавшиеся нулевые
строки. Останется матрица с единичными блоками $E_P,E_Q$, после чего
количество классов эквивалентности, совпадающее с количеством
компонент связности многообразия $\FF$, накрывающих выбранную
достижимую область, определяется по лемме~1 как ${2^{P-Q}}$. В
частности, это означает, что строки, содержащие единичный блок
$E_Q$, также не влияют на количество классов эквивалентности и могут
быть отброшены. В конечном счете, этими преобразованиями матрица $C$
приводится к единичной $E_{P-Q}$, содержащей только столбцы первой
группы. На практике для определения значения $P-Q$ достаточно
исключить все переменные второй группы и выбрать линейно независимые
строки (например, привести редуцированную матрицу к трапецевидной).

Итак, вычисление количества торов Лиувилля в регулярных интегральных
многообразиях можно провести, применяя к строкам и столбцам матрицы
$C$ метод Гаусса над полем $\bbZ$. Для критических случаев этот же
метод работает с небольшим уточнением процедуры разбиения радикалов
на группы.

\Section{Достижимые области и интегральные многообразия}\label{sec3}
Как следует из уравнений  \eqref{eq1_3}--\eqref{eq1_5}, разделяющим
множеством при классификации траекторий и интегральных многообразий
по параметрам $s,\tau$, в дополнение к очевидной особенности $s\, \tau=0$,
служит дискриминантное множество многочлена
\begin{equation}\label{eq3_1}
S(\xi)=(\xi^2-4s^2\chi^2)(\xi^2-\sigma)[(\xi+\tau)^2-r^4],
\end{equation}
которое состоит из прямых $s=\pm a$, $s=\pm b$, $\tau=(a\pm b)^2$ и
кривых, заданных уравнением $\chi=0$. В явном виде эти кривые можно
записать как ${\tau = \varphi_{\pm}(s)}$, где
$$
\varphi_{\pm}(s)=a^2+b^2-2s^2 \pm \sqrt{\mstrut
(s^2-a^2)(s^2-b^2)}.
$$
Следующее утверждение вытекает из результатов работы \cite{Kh35}.
\begin{theorem}[1]\label{th1}
Допустимое множество на плоскости $(s,\tau)$ состоит из следующих
подмножеств:
\begin{equation}\notag
\begin{array}{l}
\{-a \ls s \ls -b, \, (a-b)^2 \ls \tau < +\infty   \};\\
\{-b < s <0, \,  \varphi_+(s) \ls \tau < +\infty   \};\\
\{0 < s < b, \,  -\infty< \tau \ls \varphi_-(s) \};\\
\{b \ls s \ls a, \, -\infty < \tau \ls (a+b)^2 \};\\
\{s > a, \, \varphi_+(s) \ls \tau \ls (a+b)^2     \}.
\end{array}
\end{equation}
\end{theorem}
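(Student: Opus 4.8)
The plan is to turn the existence of a real critical motion with prescribed constants $(s,\tau)$ (with $s\tau\neq 0$) into an algebraic question about the real roots of the oscillation polynomial $S(\xi)$ of \eqref{eq3_1}. A pair $(s,\tau)$ belongs to $\mathcal F(\mathfrak O)$ exactly when the separated system \eqref{eq1_5} admits a real solution compatible with the reality of the Euler--Poisson variables \eqref{eq1_3}--\eqref{eq1_4}; by the explicit form of those formulas this is equivalent to the existence of a nonempty segment $\Delta\subset\bbR$ (possibly a point, or a pair of such segments carrying $t_1$ and $t_2$) bounded by roots of $S$ on which all the sign conditions read off from \eqref{eq1_2}--\eqref{eq1_5} hold --- in particular $S(\xi)/(s\tau)\gs 0$, $V_i^2=(\xi^2-4s^2\chi^2)/(s\tau)\gs 0$, and the combinations of $\vk$, $K_i$, $L_i$, $R$ that enter \eqref{eq1_3}, \eqref{eq1_4} remain real.

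I would then analyse the roots of $S$. They are $\pm 2|s|\chi$, $\pm\vk$ and $-\tau\pm r^2$, and a short computation shows that a collision of two of them occurs precisely on the curves in the statement: $\sigma$ (hence $\vk$) vanishes exactly on $\tau=(a\pm b)^2$, the discriminant of $\sigma$ in $\tau$ being $16a^2b^2$; $\chi$ vanishes exactly on $\tau=\varphi_\pm(s)$; and since $4s^2\chi^2=\sigma+4s^2\tau$ equals $(\tau+r^2)^2$ on $s=\pm a$ and $(\tau-r^2)^2$ on $s=\pm b$, one gets $2|s|\chi=|\tau+r^2|$ on $s=\pm a$ and $2|s|\chi=|\tau-r^2|$ on $s=\pm b$; no further collisions occur while $s\tau\neq 0$. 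Consequently the open set $\{s\tau\neq 0\}$ with these curves removed decomposes into finitely many cells, on each of which the ordered configuration of the roots of $S$ together with the relevant constraint values is constant; hence the truth value of ``$(s,\tau)\in\mathcal F(\mathfrak O)$'' is constant on each cell.

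Next I would take one test point per cell and decide whether some segment bounded by consecutive roots of $S$ (or by a root and $\pm\infty$) supports all the sign conditions above, i.e.\ whether the admissible region for $(t_1,t_2)$ is nonempty; this selects exactly the interiors of the five sets listed. The remaining work is the boundary. On each of the curves $s=\pm a$, $s=\pm b$, $\tau=(a\pm b)^2$, $\tau=\varphi_\pm(s)$ the polynomial $S$ has a multiple root, so an admissible motion degenerates --- to a relative equilibrium, a pendulum-type motion, or a periodic orbit --- and these limiting critical motions were classified in \cite{Kh35}; a boundary point is kept iff such a real degenerate solution exists. This produces precisely the closed sides ($\ls$, $\gs$) and open sides ($<$, $>$) written in the theorem, including the isolated boundary segments appearing where adjacent pieces of the region have mismatched $\tau$-ranges (near $s=\pm b$ and $s=a$). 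The excluded locus $s\tau=0$ is handled separately by the limiting procedure indicated at the beginning of this section.

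I expect the main obstacle to be the sign bookkeeping in the middle step: \eqref{eq1_2}--\eqref{eq1_5} carry many nested radicals, and one must verify, cell by cell, both that inside the five claimed pieces all the constraints are simultaneously satisfiable on a genuine oscillation segment and that outside them no admissible segment can exist. The single most delicate point is the exact boundary inclusion, which forces one to match each bounding arc with the correct limiting critical motion. In substance these are the computations already carried out in \cite{Kh35}, which is the source from which the proof is drawn.
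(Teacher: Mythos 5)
The paper offers no proof of Theorem~1 at all --- it states only that the result follows from the cited source [3] ({\cyr\cyrh}35) --- and your sketch is a faithful reconstruction of that argument: the root collisions you compute for $S(\xi)$ (on $s=\pm a$, $s=\pm b$, $\tau=(a\pm b)^2$ and $\chi=0$) are exactly the curves the paper itself uses immediately afterwards to delimit the regions I--XIII of Table~1. Your identities $4a^2\chi^2=(\tau+r^2)^2$ and $4b^2\chi^2=(\tau-r^2)^2$ check out, so the approach is correct and essentially the one carried out in the cited work.
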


\begin{figure}[htp]
\centering
\includegraphics[width=70mm,keepaspectratio]{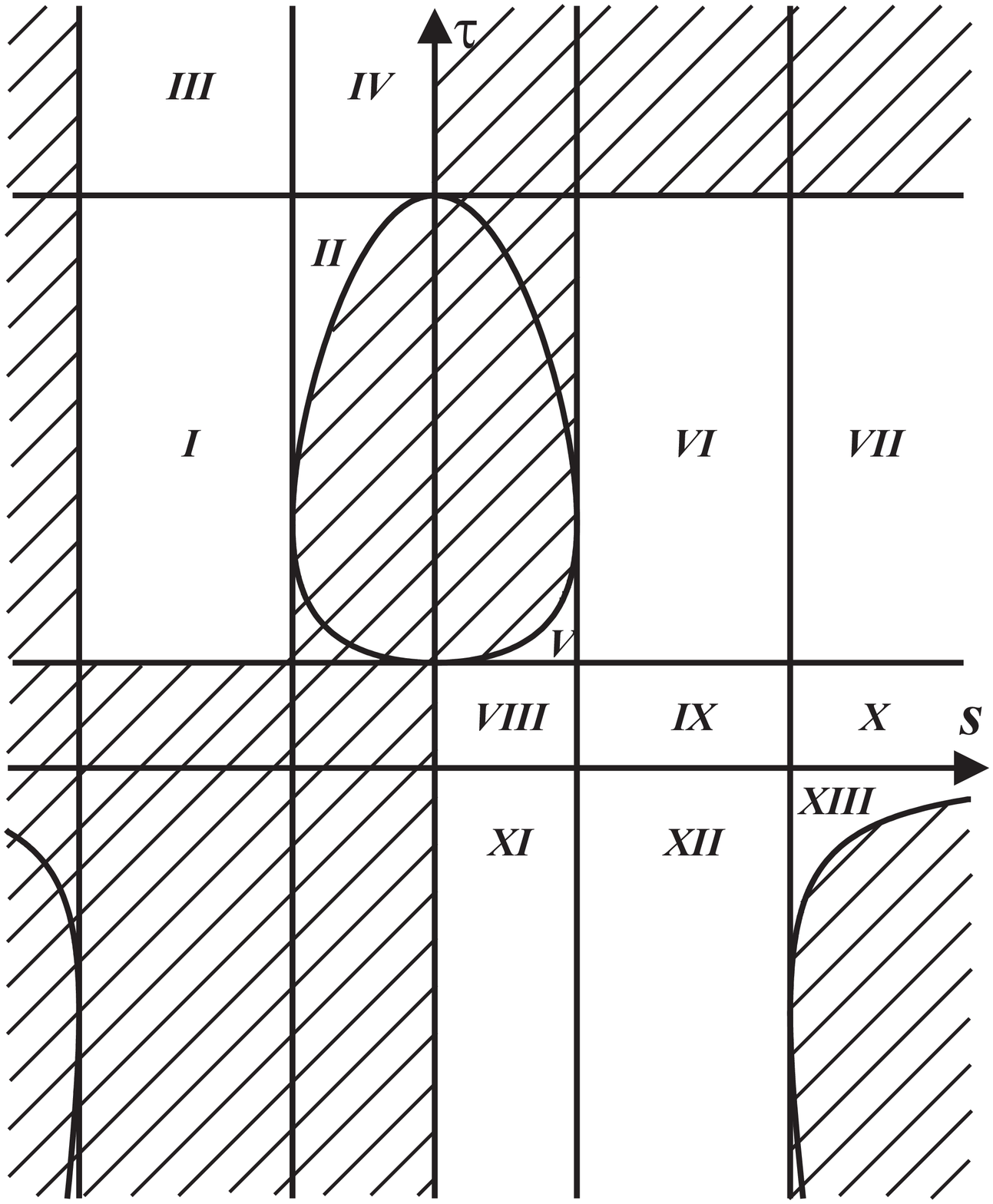}
\caption{Разделяющие кривые и кодировка областей.}\label{fig_harl1}
\end{figure}

\begin{table}[htp]
\centering \small
\begin{tabular}{|c| c| c| c|}
\multicolumn{4}{r}{Таблица 1}\\
\hline
\begin{tabular}{c}Область\end{tabular}&\begin{tabular}{c}Корни $S(\xi)$\end{tabular}&\begin{tabular}{c}Изменение $t_1$\end{tabular}
&\begin{tabular}{c}Изменение $t_2$\end{tabular}\\
\hline \ru  \ts{I} & $\mo<-\vo<\no<\vo$ & $[\,\mo,-\vo\,]$ & $[\,\no,\vo\,]$\\
\hline \ru \ts{II} & $\mo<\no<-\vo<\vo$ & $[\,\mo,\no\,]$ & $[\,-\vo,\vo\,]$\\
\hline \ru \ts{III} & $\mo<-\vo<\no<-\vk<\vk<\vo$ & $[\,\mo,-\vo\,]$ & $[\,\no,-\vk\,]$\\
\hline \ru \ts{IV} & $\mo<\no<-\vo<-\vk<\vk<\vo$ & $[\,\mo,\no\,]$ & $[\,-\vo,-\vk\,]$\\
\hline \ru \ts{V} & $\mo<-\vo<\vo<\no$ & $[\,-\vo,\vo\,]$ & $[\,\no \,(\pm \infty)\, \mo \,]$\\
\hline \ru \ts{VI} & $\mo<-\vo<\no<\vo$ & $[\,-\vo,\no\,]$ & $[\,\vo \,(\pm \infty)\,\mo\,]$\\
\hline \ru \ts{VII} & $-\vo<\mo<\no<\vo$ & $[\,\mo,\no\,]$ & $[\,\vo\,(\pm \infty)\,-\vo\,]$\\
\hline \ru \ts{VIII} & $\mo<-\vo<-\vk<\vk<\vo<\no$ & $[\,\vk,\vo\,]$ & $[\,\no\,(\pm \infty)\,\mo\,]$\\
\hline \ru \ts{IX} & $\mo<-\vo<-\vk<\vk<\no<\vo$ & $[\,\vk,\no\,]$ & $[\,\vo\,(\pm \infty)\,\mo\,]$\\
\hline \ru \ts{X} & $-\vo<\mo<-\vk<\vk<\no<\vo$ & $[\,\vk,\no\,]$ & $[\,\vo\,(\pm \infty)\,-\vo\,]$\\
\hline \ru \ts{XI} & $-\vk<-\vo<\mo<\no<\vo<\vk$ & $[\,\mo,\no\,]$ & $[\,\vo,\vk\,]$\\
\hline \ru \ts{XII} & $-\vk<-\vo<\mo<\vo<\no<\vk$ & $[\,\mo,\vo\,]$ & $[\,\no,\vk\,]$\\
\hline \ru \ts{XIII} & $-\vk<\mo<-\vo<\vo<\no<\vk$ & $[\,-\vo,\vo\,]$ & $[\,\no,\vk\,]$\\
\hline
\end{tabular}
\end{table}

На рис.~\ref{fig_harl1} показаны области \ts{I}--\ts{XIII} (области
регулярности интегрального отображения $\mathcal{F}$), исчерпывающие допустимое
множество и имеющие различное распределение корней многочлена
\eqref{eq3_1}. В дополнение к обозначениям \eqref{eq1_1} положим
$$
\mo=-\tau-r^2, \qquad \no = -\tau+r^2, \qquad  \vo =2|s|\chi.
$$
Требование вещественности выражений \eqref{eq1_3}, \eqref{eq1_4}
определяет промежутки осцилляции переменных $t_1,t_2$. Информация о
расположении корней многочлена $S(\xi)$ и достижимых областях для
переменных разделения сведена в табл.~1. В последнем столбце этой таблицы запись вида $[c_0 (\pm
\infty) c_1]$ означает, что на самом деле $c_1<c_0$, и переменная осциллирует на $\bbR\backslash
(c_1,c_0)$, периодически пересекая бесконечность.

Занумеруем радикалы \eqref{eq1_2} следующим образом
\begin{equation}\notag
R_{i 1}=K_i,\;R_{i 2}=L_i,\;R_{i 3}=M_i,\;R_{i 4}=N_i,\;R_{i 5}=V_i
\qquad  (i=1,2)
\end{equation}
и введем булевы переменные (здесь уместно напомнить замечание~1)
\begin{equation}\notag
\vu_\gamma = \lsgn(R_{1 \gamma}),\qquad  \vu_{5+\gamma} = \lsgn(R_{2 \gamma})
\qquad (\gamma =1, \ldots, 5).
\end{equation}

Выделим в выражениях \eqref{eq1_3} компоненты булевой
вектор-функции \eqref{eq2_2}, отвечающие используемым мономам от
радикалов (в скобках записаны сами мономы)
\begin{equation}\notag
\begin{array}{lll}
z_1  = \vu_1  \oplus \vu_2  \oplus \vu_6  \oplus \vu_7   &{}& (K_1 L_1 K_2 L_2); \\
z_2  = \vu_3  \oplus \vu_4  \oplus \vu_5  \oplus \vu_8  \oplus \vu_9 \oplus \vu_{10} &{}& ( M_1 N_1 V_1 M_2 N_2 V_2); \\
z_3  = \vu_5  \oplus \vu_{10} &{}& ( V_1  V_2); \\
z_4  = \vu_1  \oplus \vu_2  \oplus \vu_5  \oplus \vu_6  \oplus \vu_7  \oplus \vu_{10} &{}& (K_1 L_1 V_1 K_2 L_2 V_2); \\
z_5  = \vu_3  \oplus \vu_4  \oplus \vu_8  \oplus \vu_9  &{}& (M_1 N_1 M_2 N_2); \\
z_6  = \vu_1  \oplus \vu_2  \oplus \vu_3  \oplus \vu_4  \oplus \vu_6  \oplus \vu_7  \oplus \vu_8  \oplus \vu_9 &{}& (K_1 L_1 M_1 N_1 K_2 L_2 M_2 N_2 ); \\
z_7  = \vu_1  \oplus \vu_3  \oplus \vu_6  \oplus \vu_8   &{}& (K_1 M_1 K_2 M_2 ); \\
z_8  = \vu_2  \oplus \vu_3  \oplus \vu_7  \oplus \vu_8   &{}& (L_1 M_1 L_2 M_2); \\
z_9  = \vu_1  \oplus \vu_4  \oplus \vu_6  \oplus \vu_9  &{}& (K_1 N_1 K_2 N_2); \\
z_{10}  = \vu_2  \oplus \vu_4  \oplus \vu_7  \oplus \vu_9  &{}& (L_1 N_1 L_2
N_2);\\
z_{11}  = z_3  \oplus z_6 &{}& (K_1 L_1
M_1 N_1 V_1 K_2 L_2 M_2 N_2 V_2).
\end{array}
\end{equation}
Выражения \eqref{eq1_4} добавляют еще 12 компонент
\begin{equation}\notag
\begin{array}{lll}
z_{12}  = \vu_1  \oplus \vu_3  \oplus \vu_5 \oplus \vu_7 \oplus \vu_9 &\hspace*{1mm} & (K_1 M_1 V_1 L_2 N_2) ;\\
z_{13}  = \vu_2 \oplus \vu_4  \oplus \vu_6  \oplus \vu_8  \oplus \vu_{10}
 &{}&
(L_1 N_1 K_2 M_2 V_2);\\
z_{14}  = \vu_1  \oplus \vu_4  \oplus \vu_7  \oplus \vu_8  \oplus \vu_{10}
 &{}&
(K_1 N_1 L_2 M_2 V_2);\\
z_{15}  = \vu_2  \oplus \vu_3  \oplus \vu_5  \oplus \vu_6  \oplus \vu_{9}
 &{}&
(L_1 M_1 V_1 K_2 N_2);\\
z_{16}  = \vu_1  \oplus \vu_4  \oplus \vu_5  \oplus \vu_7  \oplus \vu_{8}
 &{}&
(K_1 N_1 V_1 L_2 M_2) ;\\
z_{17}  = \vu_2  \oplus \vu_3  \oplus \vu_6  \oplus \vu_9  \oplus \vu_{10}
 &{}&
(L_1 M_1 K_2 N_2 V_2) ;\\
z_{18}  = \vu_1  \oplus \vu_3  \oplus \vu_7  \oplus \vu_9  \oplus \vu_{10}
 &{}&
(K_1 M_1 L_2 N_2 V_2) ;\\
z_{19}  = \vu_2  \oplus \vu_4  \oplus \vu_5  \oplus \vu_6  \oplus \vu_{8}
 &{}&
(L_1 N_1 V_1 K_2 M_2) ;\\
z_{20}  = \vu_1  \oplus \vu_2  \oplus \vu_5  \oplus \vu_8  \oplus \vu_{9}
 &{}&
(K_1 L_1 V_1 M_2 N_2) ;\\
z_{21}  = \vu_1  \oplus \vu_2  \oplus \vu_3  \oplus \vu_4  \oplus \vu_{10}
 &{}&
(K_1 L_1 M_1 N_1 V_2) ;\\
z_{22}  = \vu_5  \oplus \vu_6  \oplus \vu_7  \oplus \vu_8  \oplus \vu_{9}
 &{}&
(V_1 K_2 L_2 M_2 N_2) ;\\
z_{23}  = \vu_3  \oplus \vu_4  \oplus \vu_6  \oplus \vu_7  \oplus \vu_{10}
 &{}& (M_1 N_1 K_2 L_2 V_2).
\end{array}
\end{equation}

Зависимость фазовых переменных от переменных разделения описывается
линейной булевой вектор-функцией ($\bbZ$-линейным отображением)
$C:\bbZ^{10} \to \bbZ^{23}$. Выбирая в матрице $C$ строки с номерами
1, 10, 2, 3, 12 в качестве ведущих и переобозначая компоненты
$(z_1,z_{10},z_2,z_3,z_{12}) \to (z_1,z_2,z_3,z_4,z_5)$, приведем
эту матрицу к трапецевидной. Ненулевыми окажутся лишь выбранные
строки. Таким образом, $\rk C=5$ и редуцированная булева
вектор-функция имеет вид:
\begin{equation}\label{eq3_2}
\begin{array}{ll}
z_{1}  = (\vu_1  \oplus \vu_6)  \oplus (\vu_4 \oplus \vu_9),\\
z_{2}  = (\vu_2  \oplus \vu_7)  \oplus (\vu_4 \oplus \vu_9),\\
z_{3}  = (\vu_3  \oplus \vu_8)  \oplus (\vu_4 \oplus \vu_9),\\
z_{4}  = \vu_5  \oplus \vu_{10},\\
z_{5}  = \vu_6  \oplus \vu_7  \oplus \vu_8 \oplus \vu_9 \oplus \vu_{10}.
\end{array}
\end{equation}
Это выражение в определенном смысле окончательное, поскольку,
очевидно, количество компонент функции уменьшить для всех значений
параметров одновременно уже нельзя. Для дальнейшей работы со
столбцами необходимо их разбиение на группы, которое зависит от
номера области регулярности. Базируясь на информации, приведенной в
табл.~1, заполняем второй столбец табл.~2. Он содержит номера
аргументов второй группы, т.е. тех, которые относятся к
периодически меняющим знак радикалам. Здесь применен следующий
прием. В областях \ts{V}--\ts{X} переменная $t_2$ пересекает
бесконечность. В связи с этим все радикалы $R_{2 \gamma}$ следует
считать меняющими знак одновременно, что не отвечает существу дела.
Многочлен $S(\xi)$ имеет четную степень, т.е. $\infty$ не
является точкой ветвления для $\sqrt {S}$. Поступим следующим
образом. Поскольку всегда $|t_2| \geqslant v$, то $t_2 \ne 0$.
Положим
$$
\tilde R_{2 \gamma}= R_{2 \gamma}/ \sqrt{t_2}, \qquad
\vu_{5+\gamma}=\lsgn(\tilde R_{2 \gamma}) \qquad (\gamma=1, \ldots, 5).
$$
Выражения \eqref{eq1_3}, \eqref{eq1_4} устроены так, что после
перехода к новым радикалам особенность в точке $t_2=\infty$
исчезает, а при пересечении этого значения теперь никакие
подкоренные выражения в $\tilde R_{2\gamma}$ ни в нуль, ни в
бесконечность не обращаются, и аргументы $\vu_\gamma$ для переменной
$t_2$ относятся к соответствующей группе по конечным границам
промежутка изменения $t_2$.

\begin{table}[htp]
\centering
\small
\begin{tabular}{|c|c|l|c|c|}
\multicolumn{5}{r}{Таблица 2}\\
\hline
\begin{tabular}{c}Область\end{tabular}&\begin{tabular}{c}Вторая группа\end{tabular}&\begin{tabular}{c}Выражения компонент\end{tabular}
&\begin{tabular}{c}$(P,Q)$\end{tabular}&\begin{tabular}{c}$c(\cf)$\end{tabular}\\
\hline \ru  $\ts{I}^*$ & {$3\,5\,9\,{10}$} & $\quad \varnothing$ & (0,0)&{1}\\
\hline \ru $\ts{II}^*$ & {$3\,4\, {10}$}& $\quad \vu_5 \oplus \vu_8
\oplus \vu_9$& (1,0)&{2}\\
\hline \ru $\ts{III}\;$ & {$3\,5\,6\,9$} & $\quad \vu_1 \oplus \vu_4 \oplus \vu_7 \oplus \vu_8 \oplus \vu_{10}$ & (1,0)&{2}\\
\hline \ru $\ts{IV}\;$ & {$3\,4\,6\,{10}$} & $\quad \vu_1 \oplus \vu_2 \oplus \vu_5 \oplus \vu_8 \oplus \vu_{9}$ & (1,0)&{2}\\
\hline \ru $\ts{V}^*$ & {$5\,8\,9$} & $\quad \vu_3 \oplus \vu_4 \oplus \vu_6 \oplus \vu_7 \oplus \vu_{10}$&  {(1,0)}&{2}\\
\hline \ru $\ts{VI}^*$ & {$4\,5\,8\,{10}$} & $\quad \varnothing $&  {(0,0)}&{1}\\
\hline \ru $\ts{VII}^*$ & {$3\,4\,{10}$}& $ \quad \vu_5 \oplus \vu_6 \oplus \vu_7 \oplus \vu_8 \oplus \vu_{9}$&  {(1,0)}&{2}\\
\hline \ru $\ts{VIII}\;$ & {$2\,5\,8\,{9}$}& $\quad \vu_3 \oplus \vu_4 \oplus \vu_6 \oplus \vu_7 $&  {(1,0)}&{2}\\
\hline \ru $\ts{IX}\;$ & {$2\,4\,8\,{10}$}& $\quad \vu_1 \oplus \vu_3 \oplus \vu_5 \oplus \vu_7 \oplus \vu_{9}$&  {(1,0)}&{2}\\
\hline \ru $\ts{X}$ & {$2\,4\,{10}$} & $\left\{ \begin{array}{l}
\vu_1\oplus \vu_3 \oplus \vu_6 \oplus \vu_8\\
\vu_5 \oplus \vu_6 \oplus \vu_7 \oplus \vu_8 \oplus \vu_{9}\end{array}\right.$&  {(2,0)}&{4}\\
\hline \ru $\ts{XI}$ & {$3\,4\,7\,{10}$} & $\quad \vu_1 \oplus \vu_2 \oplus \vu_5 \oplus \vu_8 \oplus \vu_{9}$& (1,0)&{2}\\
\hline \ru $\ts{XII}$ & {$3\,5\,7\,{9}$} & $\quad \vu_2 \oplus \vu_4 \oplus \vu_6 \oplus \vu_8 \oplus \vu_{10}$& (1,0)&{2}\\
\hline \ru $\ts{XIII}$ & {$5\,7\,9$} & $\left\{ \begin{array}{l}\vu_1
\oplus \vu_3 \oplus \vu_6
\oplus \vu_8\\
\vu_2 \oplus \vu_4 \oplus \vu_6 \oplus \vu_8 \oplus \vu_{10}\end{array}\right.$& (2,0)&{4}\\
\hline
\end{tabular}
\end{table}

В табл. 2 звездочкой помечены области, в которых $\sigma<0$ и величина $\vk$
является чисто мнимой. Тогда пары $(K_1, L_1)$ и $(K_2,L_2)$ нужно
считать комплексно сопряженными, и произвольный знак можно выбирать
лишь у одного элемента каждой пары несмотря на то, что аргументы
$\vu_1,\vu_2, \vu_5,\vu_6$ все относятся к первой группе. Таким образом,
необходимо считать, что в этих случаях введены дополнительные
тождества
\begin{equation}\label{eq3_3}
\vu_1\oplus \vu_2 \equiv 0, \qquad \vu_6\oplus \vu_7 \equiv 0.
\end{equation}
Сопоставляя выражения компонент \eqref{eq3_2} с номерами аргументов
второй группы, видим, что в областях со звездочкой всегда
присутствует один из аргументов $\vu_4, \vu_9$, что обеспечивает
возможность одновременной замены значений $\vu_1,\vu_2$ в $z_1,z_2$, а
пара $\vu_6,\vu_7$ вообще входит лишь в виде суммы (в $z_5$) и на
результат ввиду \eqref{eq3_3} не влияет. На самом деле, структура
булевой вектор-функции автоматически исключает возможность получения
избыточных значений в случае наличия комплексно сопряженных
радикалов, поэтому результат не зависит от того, как произведена
редукция матрицы $C$ по строкам. Теперь в соответствии со списком
аргументов второй группы выполняем эквивалентные преобразования с
основной матрицей $C$ функции в виде \eqref{eq3_2}. А именно, если
аргумент $\vu_\gamma$ относится ко второй группе, то с помощью строки,
где он присутствует, обнуляем все остальные элементы этого столбца.
После этого такая ведущая строка в каноническом представлении
\eqref{eq2_4} окажется за пределами первых $P-Q$ строк и ее можно
отбросить как не влияющую на количество классов эквивалентности.
Остающиеся компоненты показаны в третьем столбце табл.~2.
Особенность матрицы такова, что во всех случаях \ts{I}--\ts{XIII}
все аргументы второй группы позволяют исключить ровно одну
компоненту. Таким образом, ранг оставшейся матрицы равен <<пять
минус количество аргументов второй группы>>, причем это всегда есть
первое число в паре рангов $(P,Q)$, так как аргументов второй группы
не остается. Соответственно, всегда в такой паре $Q=0$
(предпоследний столбец в табл.~2). В силу этого дальнейшие
преобразования со столбцами по приведению к каноническому виду
\eqref{eq2_4} можно уже не проводить. В случаях со звездочкой одна
из оставшихся компонент всегда имеет вид $\vu_1 \oplus \vu_2 \oplus \vu_6
\oplus \vu_7$, т.е. является константой в силу \eqref{eq3_3},
поэтому ее следует исключить, что понижает ранг еще на единицу (в
связи с этим в таблице возникает пустое множество компонент).
Окончательно, число $c(\cf)$ равно $2^P$ (см. последний столбец в
табл.~2). Напомним, что через $c(\cf)$ обозначено количество классов
эквивалентности для заданного значения $\cf=(s,\tau)$ постоянных
первых интегралов -- количество связных компонент (торов
Лиувилля) интегрального многообразия $\FF$. Результат показан на
рис.~\ref{fig_harl2},\,{\it a}. Исследование регулярной фазовой
топологии задачи завершено.

Рассмотрим критические случаи -- отрезки бифуркационного множества
между областями. Поскольку наша цель -- продемонстрировать технику,
то для краткости случаи, известные из других источников, просто
перечислим. Случаи $\chi=0$ отвечают в точности особым периодическим
решениям случая Богоявленского \cite{Kh36}, причем константа
интеграла $s$ -- это использованный в указанной работе параметр, а
величина $\tau$ выражается через постоянную частного интеграла
Богоявленского. Поэтому в каждую точку множества $\chi=0$ переходят
две точки кривых $\delta_1, \delta_2, \delta_3$ работы \cite{Kh36}.
Таким образом, на криволинейных участках разделяющего множества
имеем следующие критические поверхности: $2 S^1$ при $s\in (-b,0)
\cup (0,b)$, $4 S^1$ при $s\in (a,+\infty)$. Полупрямая $\tau=0,
s>0$ отвечает пересечению с критической подсистемой, изученной в
\cite{KhSav}. В ней этому множеству соответствует случай $\ell=0,
m<0$ и имеется такая связь использованных параметров: $ m=-
1/2s$. Следовательно интегральные поверхности
для $\tau=0$ таковы: $2\bbT^2$ при $s\in (0,b) \cup (b,a)$,
$4\bbT^2$ при $s\in (a,+\infty)$. Этот случай является
полурегулярным, так как реальных бифуркаций не происходит, возникает
лишь особенность в выражениях \eqref{eq1_3}, \eqref{eq1_4},
которую можно устранить заменой переменных.

Для примера рассмотрим более подробно переходы, возникающие при
круговом обходе критической точки типа <<седло-седло>>.

\begin{figure}[htp]
\centering
\includegraphics[width=125mm,keepaspectratio]{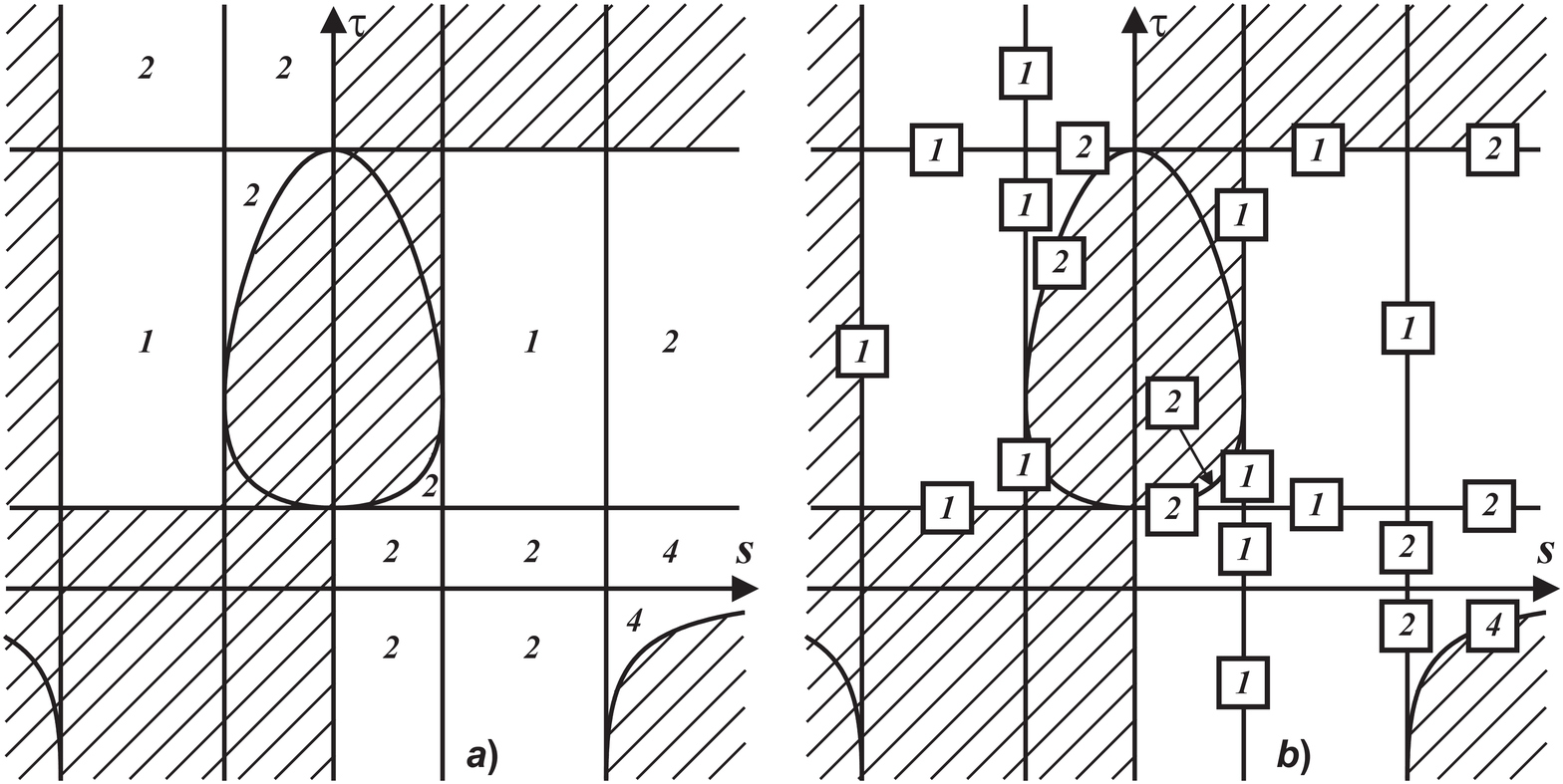}
\caption{Количество торов и связность критических
поверхностей.}\label{fig_harl2}
\end{figure}

При переходе $\ts{I}{\to}\ts{II}$ (см. рис. 1) возникает кратный корень
$-\vo=\no$. Во вторую группу попадают радикалы $3, 4, 5, 9, 10$.
Переменная $\vu_5$ позволяет исключить единственную нетривиальную
компоненту области $\ts{II}^*$ (см. табл.~2). Получаем пару
$(P,Q)=(0,0)$, поэтому $c(\cf)=1$. Но перестройка здесь $\bbT^2 \to
2\bbT^2$. Следовательно, $\FF=(S^1 \vee S^1)\times S^1$.

При переходе $\ts{I}{\to}\ts{III}$ возникает кратный корень
$\varkappa=0$. Во вторую группу попадают радикалы $3,5,6,7,9$.
Переменная $\vu_7$ позволяет исключить единственную нетривиальную
компоненту области $\ts{III}$. Получаем $(P,Q)=(0,0)$, $c(\cf)=1$.
Перестройка $\bbT^2 \to 2\bbT^2$, значит, $\FF=(S^1 \vee S^1)\times
S^1$.

При переходе $\ts{III}{\to}\ts{IV}$ возникает кратный корень
$-\vo=\no$, лежащий строго между $\mo$ и $-\varkappa$. Во вторую
группу попадают радикалы $3,4,5,6,8,10$. По отношению к области
\ts{III} добавились аргументы $\vu_4,\vu_8$, любой из них позволяет
исключить единственную компоненту функции. По отношению к области
\ts{IV} добавились аргументы $\vu_5,\vu_8$, и снова любой из них
позволяет исключить единственную компоненту функции. В результате
$(P,Q)=(0,0)$, $c(\cf)=1$. Перестройка $2\bbT^2 \to 2\bbT^2$,
значит, $\FF=(S^1 \ddot \vee S^1)\times S^1$,  где $S^1 \ddot \vee S^1$ -- пара окружностей, пересекающихся по двум
точкам.

При переходе $\ts{II}{\to}\ts{IV}$ возникает кратный корень
$\varkappa=0$ между $-\vo$ и $\vo$. Во вторую группу попадают
радикалы $3,4,6,7,10$. По отношению к области \ts{II} добавились
аргументы $\vu_6,\vu_7$, которые входили в компоненту-константу, поэтому
это не повлияло на результат. По отношению к области \ts{IV}
добавился аргумент $\vu_7$, и он не входит в оставшуюся согласно
табл.~2 компоненту функции. Поэтому результат такой же, как и в
самих областях -- $(P,Q)=(1,0)$, $c(\cf)=2$. Однако, здесь
перестройка по количеству торов такая же, как в предыдущем случае
$2\bbT^2 \to 2\bbT^2$. Поэтому критическая поверхность есть
$\FF=2(S^1 \vee S^1)*S^1$ (каждая компонента связности -- косое
произведение <<восьмерки>> на окружность).

Аналогично рассматриваются и все остальные случаи. На
рис.~\ref{fig_harl2},\,{\it b}  показано количество связных компонент
критических поверхностей для всех участков бифуркационной диаграммы,
из чего, благодаря найденному ранее количеству торов Лиувилля в
регулярных областях, однозначно восстанавливается и топологический
тип любой критической поверхности.

\end{document}